\documentclass[letterpaper, 10 pt, conference]{ieeeconf}  

\usepackage{xurl}
\usepackage[hidelinks]{hyperref}

\usepackage{cite}
\usepackage{amsmath,amssymb,amsfonts,mathtools,amsthm,mathrsfs}
\usepackage{textcomp}
\usepackage{xcolor}
\usepackage{tikz,graphics,color,float,epsf}
\usepackage[ruled,vlined]{algorithm2e}

\DeclareMathOperator*{\argmax}{arg\,max}

\newtheorem{theorem}{Theorem}

\newtheorem{proposition}[theorem]{Proposition}

\SetKw{Assert}{Assert}
\SetKwInOut{Require}{Require}

\usepackage{graphicx}
\usepackage{subcaption}
\usepackage{balance}
\usepackage{eso-pic}

\IEEEoverridecommandlockouts                              

\title{\LARGE \bf
Computing Robustness to Sparse Diagonal Uncertainty
}

\author{Christoffer Kjellson and Emma Tegling
\thanks{The authors are with the Department of Automatic Control and the ELLIIT Strategic Research Area at Lund University. Email: \{{\tt\small{christoffer.kjellson, emma.tegling}\}@control.lth.se.}}%
\thanks{This work is partially funded by the Wallenberg AI, Autonomous Systems and Software Program (WASP) funded by the Knut and Alice Wallenberg Foundation.}
}

\begin{document}
\bstctlcite{IEEEexample:BSTcontrol}

\maketitle
\thispagestyle{empty}
\pagestyle{empty}

\AddToShipoutPictureFG*{%
  \AtPageLowerLeft{%
    \put(40,30){%
      \parbox{\dimexpr\paperwidth-80pt\relax}{%
        \centering\footnotesize
        This work has been submitted to the IEEE for possible publication.
        Copyright may be transferred without notice, after which this version
        may no longer be accessible.
      }%
    }%
  }%
}

\begin{abstract}
A new robustness metric~$\nu$ was recently proposed as a substitute for the structured singular value~$\mu$ to better capture robustness to sparse diagonal uncertainty, but its computation has remained an open problem. In this paper, we show that computing~$\nu$ is equivalent to maximizing the spectral radius of a nonnegative matrix product. This equivalence allows us to transfer existing results on spectral-radius maximization to~$\nu$, including a refined upper bound and conditions under which the bounds coincide. We then provide reformulations and structural results that enable an algorithm to solve the nonconvex optimization problem for nontrivial problems using global solvers. This also enables identification of the most fragile parts of the system. Our results do not yet provide a scalable solution for computing~$\nu$, but they are an important step toward computability and interpretability.
\end{abstract}

\section{INTRODUCTION}

In this paper, we study the robustness of linear time-invariant systems to diagonal structured uncertainty, as visualized by the interconnection in Fig.~\ref{fig:conn}. A common measure of robustness for such systems is the structured singular value from robust control~\cite{ssv}. However, quantifying robustness to diagonal uncertainty using the structured singular value does not capture the sparsity of the uncertainty well~\cite{kjellqvist2022}.

In many networked systems, uncertainty is naturally distributed across many local components, such as nodes, communication links, or physical interconnections. This has motivated the study of robustness to diagonal structured uncertainty within networked and interconnected systems~\cite{karow, zelazo, pates, mukherjee}. However, in such systems, a sparse failure mode may be much more relevant than one that affects many nodes or edges. Also, because many critical large-scale systems can be modeled as graphs, choosing the appropriate robustness measure for diagonal uncertainty is essential in many applications. These observations were made in~\cite{kjellqvist2022} and led to the development of the robustness metric $\nu$\footnote{The metric $\nu$ is unrelated to the more well-known Vinnicombe's $\nu$-gap.}, building on ideas from~\cite{dahleh}. Related approaches for reducing the conservativeness of the structured singular value were explored in~\cite{seungil} and~\cite{bamieh}, though in different problem settings.

\usetikzlibrary{positioning,calc}
\begin{figure}
    \centering
    \begin{tikzpicture}[
        >=stealth,
        block/.style={
            draw,
            rectangle,
            minimum height=0.8cm,
            minimum width=0.8cm,
            align=center,
            inner sep=4pt,
            line width=1.2pt
        },
        node distance=0.2cm
    ]
    \node[block] (Delta) {
        $\boldsymbol{\Delta}=
        \begin{bmatrix}
            \boldsymbol{\delta}_1 & 0 & \cdots & 0 \\
            0 & \boldsymbol{\delta}_2 & \ddots & \vdots \\
            \vdots & \ddots & \ddots & 0 \\
            0 & \cdots & 0 & \boldsymbol{\delta}_n
        \end{bmatrix}$
    };
    \node[block, below=of Delta] (G) {$\mathbf{G}$};
    \draw[->, very thick] (Delta.east) -- ++(1.0,0) |- (G.east);
    \draw[->, very thick] (G.west) -- ++(-2.5,0) |- (Delta.west);
    \end{tikzpicture}
    \caption{A causal linear time-invariant operator $\mathbf{G}$ interconnected with diagonal structured uncertainty $\boldsymbol{\Delta}$.}\label{fig:conn}
\end{figure}

The authors of~\cite{kjellqvist2022} left the computation of $\nu$ unexplored, but the upper bound $\bar{\nu}$ has been used for controller synthesis using system-level synthesis (SLS)~\cite{li2022}. Even though the computation of the exact value of $\nu$ might not be useful for SLS, which requires a high degree of simplicity in the structure of the computation, being able to compute $\nu$ is still important. This would enable comparison of different systems and controllers with respect to $\nu$-robustness and provide a deeper understanding of this metric and its bounds.

Our first result is that computing $\nu$ is equivalent to maximizing the spectral radius of a certain nonnegative matrix product. Optimization of the spectral radius of matrices has been widely studied, for example, in~\cite{nesterov},~\cite{boyd2004convex}, and~\cite{axtell2009}, but the difficulty depends heavily on the specifics of the problem. We find that our exact formulation has been studied to a limited extent previously in~\cite{elsner2015} and~\cite{dasaratha2023}, motivated by resource allocation and growth maximization problems. We connect these results to $\nu$-analysis and further provide new reformulations and structural dimension-reduction results for the resulting optimization problems. Finally, we combine these results to propose an algorithm for computing $\nu$ for nontrivial examples. Importantly, this not only computes $\nu$, but also identifies the components of the system that are most fragile.

\section{PRELIMINARIES}

The robustness metric $\nu$ is defined as
\begin{equation*}
    \nu = \frac{1}{\inf\limits_{\boldsymbol{\Delta}\in\mathscr{D}}\{\sum_{i=1}^{n}\|\boldsymbol{\delta}_i\|_1 : (I-\mathbf{G}\boldsymbol{\Delta})^{-1}\text{ unstable}\}},
\end{equation*}
where, just as in~\cite{kjellqvist2022}, $\mathscr{D}$ is the set of diagonal $\ell_\infty$-stable causal linear time-varying operators, and $\mathbf{G}$ is an $n$-input, $n$-output stable causal linear time-invariant system. Since $\boldsymbol{\Delta}$ is diagonal, we let $\boldsymbol{\Delta}=\operatorname{diag}(\boldsymbol{\delta})$, where $\boldsymbol{\delta}_i$ denotes each diagonal element in $\boldsymbol{\Delta}$. The metric $\nu$ quantifies the sum of the largest magnitudes of each independent perturbation, and thus it accounts for the sparsity of the uncertainty.

For the same uncertainty set $\mathscr{D}$, the structured singular value is given by
\begin{equation*}
    \mu_{\mathscr{D}} = \frac{1}{\inf\limits_{\boldsymbol{\Delta}\in\mathscr{D}}\{\|\boldsymbol{\Delta}\|_{\infty,\infty} : (I-\mathbf{G}\boldsymbol{\Delta})^{-1}\text{ unstable}\}},
\end{equation*}
where $\|\cdot\|_{\infty,\infty}$ is the induced operator norm on $\ell_\infty$. Note that $\nu$ is only defined for diagonal uncertainty, which is why we exclude the subscript $\mathscr{D}$ in its definition, but keep it for $\mu$. We also take the dependence on $\mathbf{G}$ as implicit.

Now, let the magnitude matrix of $\mathbf{G}$ be defined as
\begin{equation*}
    M =
    \begin{bmatrix}
        \|G_{11}\|_1 & \hdots & \|G_{1n}\|_1 \\
        \vdots       & \ddots & \vdots \\
        \|G_{n1}\|_1 & \hdots & \|G_{nn}\|_1 \\
    \end{bmatrix}
\end{equation*}
where $G$ is the impulse response matrix of $\mathbf{G}$. By construction, $M$ is nonnegative, and each element $M_{ij}$ quantifies the worst-case gain between the uncertainty channels $i$ and $j$. Then, we can restate an important result from~\cite{kjellqvist2022}, showing that the computation of $\nu$ reduces to an optimization problem involving only the static matrix $M$.
\begin{theorem}[\!\cite{kjellqvist2022}, Proposition 1.1]
    Let $M$ be the magnitude matrix of $\mathbf{G}$. Then
    \begin{equation}\label{eq:nudef2}
        \nu=\frac{1}{\inf\limits_{\delta\in\mathbb{R}^n_+}\{\sum_{i=1}^{n}\delta_i : (I-\operatorname{diag}(\delta)M)\text{  }\mathrm{singular}\}}.
    \end{equation}
\end{theorem}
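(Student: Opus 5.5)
We establish two inequalities between the operator problem defining $\nu$ and the static problem \eqref{eq:nudef2}, each by a reduction through the magnitude matrix $M$. Write $\nu^{-1}_{\mathrm{op}}$ for the operator infimum and $\nu^{-1}_{M}$ for the infimum in \eqref{eq:nudef2}. The key tool is the standard characterization of $\ell_\infty$-stability of $(I-\mathbf{G}\boldsymbol{\Delta})^{-1}$ for diagonal LTV perturbations, as in Dahleh--Khammash small-gain theory for the $\ell_1$ setting. Let $\gamma=\operatorname{diag}(\|\boldsymbol{\delta}_1\|_{\infty,\infty},\dots,\|\boldsymbol{\delta}_n\|_{\infty,\infty})$, where $\|\boldsymbol{\delta}_i\|_{\infty,\infty}$ equals the $\ell_1$-type norm appearing in the definition of $\nu$. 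Then the interconnection is robustly stable against all diagonal perturbations with these componentwise norm bounds if and only if $\rho(\gamma M)<1$. We will take this as the backbone.

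\textbf{Step 1 (the static problem is an infimum over a spectral radius condition).} For $\delta\in\mathbb{R}^n_+$, the matrix $\operatorname{diag}(\delta)M$ is nonnegative, so by Perron--Frobenius $\rho(\operatorname{diag}(\delta)M)$ is an eigenvalue. Moreover, $t\mapsto\rho(t\operatorname{diag}(\delta)M)=t\rho(\operatorname{diag}(\delta)M)$ is continuous and increasing. Hence $I-\operatorname{diag}(\delta)M$ is singular with minimal $\sum\delta_i$ exactly on the boundary $\rho(\operatorname{diag}(\delta)M)=1$. Indeed, if $I-\operatorname{diag}(\delta)M$ is singular then $1$ is an eigenvalue, so $\rho\ge 1$, and scaling $\delta$ down by $1/\rho$ gives a singular point with smaller sum. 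Therefore
\[
\nu_M^{-1}=\inf\Bigl\{\textstyle\sum_i\delta_i:\ \delta\ge 0,\ \rho(\operatorname{diag}(\delta)M)\ge 1\Bigr\}.
\]

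\textbf{Step 2 ($\nu_{\mathrm{op}}^{-1}\ge\nu_M^{-1}$, the sufficiency direction).} Take any destabilizing $\boldsymbol{\Delta}\in\mathscr{D}$ and set $\delta_i=\|\boldsymbol{\delta}_i\|_1$. Suppose $\rho(\operatorname{diag}(\delta)M)<1$. Then there is a positive vector $x$ with $\operatorname{diag}(\delta)Mx<x$, obtained by perturbing the Perron vector. In the scaled norm $\|w\|_x=\max_i|w_i|/x_i$ on $\ell_\infty^n$, the loop gain of $\mathbf{G}\boldsymbol{\Delta}$ is then below one. This uses $\|(\mathbf{G}u)_i\|_\infty\le\sum_j M_{ij}\|u_j\|_\infty$. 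The small-gain theorem then shows that $(I-\mathbf{G}\boldsymbol{\Delta})^{-1}$ is stable, which is a contradiction. Hence every destabilizing $\boldsymbol{\Delta}$ has $\rho\ge1$, and Step 1 gives $\sum\delta_i\ge\nu_M^{-1}$.

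\textbf{Step 3 ($\nu_{\mathrm{op}}^{-1}\le\nu_M^{-1}$, the necessity direction).} This is the main obstacle. Given $\delta\ge0$ with $\rho(\operatorname{diag}(\delta)M)=1$ and any $\epsilon>0$, we must construct a time-varying diagonal $\boldsymbol{\Delta}$ with $\|\boldsymbol{\delta}_i\|_1\le(1+\epsilon)\delta_i$ that destabilizes the loop. We will follow the Khammash--Pearson/Dahleh construction. First truncate the impulse responses $G_{ij}$ to a finite horizon $N$ that captures all but $\epsilon$ of each $\ell_1$ norm. Then build, block by block in time, memory-full LTV perturbations. Each $\boldsymbol{\delta}_i$ reads a past window of its input and outputs $\delta_i$ times the sign-aligned value, so that each channel realizes $\|G_{ij}\|_1$ rather than a smaller gain. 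The sign choices are made so that, with weights given by the Perron vector of $\operatorname{diag}(\delta)M$, the signal amplitudes in each channel grow geometrically, or at least fail to decay, along a sequence of time blocks. This yields an unbounded response to a bounded input, hence instability. The delicate points are handling the coupling between channels when signs must align simultaneously for all $j$, and ensuring causality. We handle both by spacing the blocks far apart in time, so that interference from earlier blocks is $O(\epsilon)$. We also treat reducible $M$ by restricting to the irreducible class that attains the spectral radius. Letting $\epsilon\to0$ gives the reverse inequality, and combining it with Step 2 yields \eqref{eq:nudef2}.
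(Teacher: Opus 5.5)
The paper does not prove this statement; it imports it from \cite{kjellqvist2022}. Your route, Perron--Frobenius scaling combined with the Khammash--Pearson characterization of robust $\ell_\infty$-stability under diagonal LTV perturbations, is the natural one. Steps~1 and~2 are correct, apart from one slip in Step~2. For the loop $\mathbf{G}\boldsymbol{\Delta}$ the weighted estimate is $\|(\mathbf{G}\boldsymbol{\Delta}w)_i\|_\infty\le(M\operatorname{diag}(\delta)x)_i\|w\|_x$. So you need $x>0$ with $M\operatorname{diag}(\delta)x<x$, not $\operatorname{diag}(\delta)Mx<x$; with your $x$ the bound can fail. Such an $x$ exists because $\rho(M\operatorname{diag}(\delta))=\rho(\operatorname{diag}(\delta)M)$; alternatively, run the small-gain argument on $\boldsymbol{\Delta}\mathbf{G}$. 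If you invoke the necessity half of Khammash--Pearson as a black box, as your ``backbone'' sentence does, Step~3 is immediate and the proof is complete.

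The self-contained construction you sketch for Step~3, however, has a genuine gap: it misplaces the obstruction. For one fixed output $i$, aligning all inputs $w_j$ with the reversed signs of $G_{ij}$ costs nothing, since each $\boldsymbol{\delta}_j$ sets its own output. The real conflict is that a single $w_j$ feeds every output $i$, and the required sign patterns are incompatible. For example, with impulse responses $G_{11}=(1,1,0,\dots)$ and $G_{21}=(1,-1,0,\dots)$ one has $|z_1(t)|+|z_2(t)|\le 2\|w_1\|_\infty$. Hence $\|G_{11}\|_1=\|G_{21}\|_1=2$ can never both be realized at one read time. Spacing successive generations far apart only suppresses tail and cross-generation interference; it does nothing about this.

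The missing idea is time multiplexing combined with the memory of the LTV blocks:
\begin{itemize}
\item Split each generation into $n$ well-separated sub-windows, one per output $i$.
\item In sub-window $i$, let every $w_j$ play the reversed sign pattern of the truncated $G_{ij}$ at its current amplitude.
\item Let $\boldsymbol{\delta}_i$ sample $z_i$ only at the end of its own sub-window and hold that sample.
\item In every sub-window of the next generation, let $\boldsymbol{\delta}_i$ replay at most $(1+\epsilon)\delta_i$ times the held sample, with suitably chosen signs.
\end{itemize}
Each such $\boldsymbol{\delta}_i$ is linear and strictly causal, so the loop is well posed, and its gain is at most $(1+\epsilon)\delta_i$. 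The sampled amplitudes then satisfy $a^{(k+1)}\ge(1-\eta)(1+\epsilon)M\operatorname{diag}(\delta)\,a^{(k)}$, up to an $\eta$-small interference term.

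Note also that ``fail to decay'' is not enough, since a bounded non-decaying response is compatible with $\ell_\infty$-stability. You need strict geometric growth from a finitely supported seed input. You get it by taking the truncation error $\eta$ small relative to $\epsilon$. You may also scale the replayed samples down, which is allowed, so that the amplitudes stay exactly along a Perron vector of the irreducible class with growth factor above one.
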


It will also be convenient to define the vector set $\mathcal{D}$ with implied dimension $m$, given by
\begin{equation}\label{eq:ddefs}
    \mathcal{D} = \{\gamma\in\mathbb{R}^m : \gamma\geq0, \mathbf{1}^\top\gamma=1\}.
\end{equation}
Here, and in the remainder of the paper, inequalities involving vectors or matrices are understood to be elementwise.

\subsection{Illustrative example}

Consider the systems
\begin{equation*}
    \mathbf{P}_1:\begin{cases}
        z_1(t\!+\!1)\!=\!\delta_1z_1(t)\\
        z_2(t\!+\!1)\!=\!\delta_2z_2(t)\\
        \quad\vdots\\
        z_n(t\!+\!1)\!=\!\delta_nz_n(t)
    \end{cases}
    \!\!\!\!\!\!,
    \mathbf{P}_2:\begin{cases}
        z_1(t\!+\!1)\!=\!\delta_1z_2(t)\\
        z_2(t\!+\!1)\!=\!\delta_2z_3(t)\\
        \quad\vdots\\
        z_n(t\!+\!1)\!=\!\delta_nz_1(t)
    \end{cases},
\end{equation*}
where  $\mathbf{P}_1$ is a decoupled first-order system and $\mathbf{P}_2$ is a delayed ring, both with uncertain system parameters~\cite{kjellqvist2022}. Destabilizing $\mathbf{P}_2$ minimally requires perturbing all states, while for $\mathbf{P}_1$, perturbing only one state with the same magnitude is enough. This means that $\mu_{\mathscr{D}}(\mathbf{P}_1)=\mu_{\mathscr{D}}(\mathbf{P}_2)=1$, but $\nu(\mathbf{P}_1)=1$ and $\nu(\mathbf{P}_2)=1/n$~\cite{kjellqvist2022}. This shows that $\nu$ better captures the amount of uncertainty needed to destabilize the system. However, for $\mathbf{P}_2$ the expression for~$\nu$ is easy to compute, but for the nontrivial systems which we will consider in Section~\ref{sec:dynsim}, the computation of~$\nu$ is no longer straightforward.

\section{EQUIVALENCE TO SPECTRAL RADIUS MAXIMIZATION}

We start by showing the connection between computing $\nu$ and spectral radius maximization of a matrix product in the following proposition.
\begin{proposition}\label{thm:matmul}
    Let $M$ be the magnitude matrix of $\mathbf{G}$, $\rho$ the spectral radius, and $\mathcal{D}$ as defined in~\eqref{eq:ddefs}. Then
    \begin{equation}\label{eq:equiv}
        \nu = \max_{\gamma\in\mathcal{D}}\rho(\operatorname{diag}(\gamma)M).
    \end{equation}
    Moreover, if $\nu>0$, then $\gamma^*$ is a maximizer of~\eqref{eq:equiv} if and only if
    \begin{equation*}
        \delta^*=\frac{1}{\nu}\gamma^*
    \end{equation*}
    is a minimizer of~\eqref{eq:nudef2}.
\end{proposition}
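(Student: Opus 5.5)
The plan is to start from the characterization~\eqref{eq:nudef2} of the preceding theorem and rescale. Every nonzero $\delta\in\mathbb{R}^n_+$ can be written uniquely as $\delta=t\gamma$ with $t=\mathbf{1}^\top\delta>0$ and $\gamma\in\mathcal{D}$. The matrix $I-t\operatorname{diag}(\gamma)M$ is singular if and only if $1/t$ is an eigenvalue of $\operatorname{diag}(\gamma)M$. For fixed $\gamma$, the infimum in~\eqref{eq:nudef2} is therefore taken over the $t>0$ with $1/t\in\operatorname{spec}(\operatorname{diag}(\gamma)M)$. Since this matrix is nonnegative, the Perron--Frobenius theorem gives that $\rho(\operatorname{diag}(\gamma)M)$ is itself an eigenvalue. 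Any positive real eigenvalue $\lambda$ satisfies $\lambda\le\rho$, so the smallest admissible $t$ is $1/\rho(\operatorname{diag}(\gamma)M)$, attained whenever $\rho>0$. If $\rho=0$, there is no admissible $t$ for that $\gamma$. Minimizing over $\gamma$ then gives
\begin{equation*}
\inf_{\delta}\Bigl\{\textstyle\sum_i\delta_i\Bigr\}=\inf_{\gamma\in\mathcal{D}}\frac{1}{\rho(\operatorname{diag}(\gamma)M)}=\frac{1}{\sup_{\gamma\in\mathcal{D}}\rho(\operatorname{diag}(\gamma)M)},
\end{equation*}
with the conventions $1/0=\infty$ and $1/\infty=0$. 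This yields $\nu=\sup_{\gamma\in\mathcal{D}}\rho(\operatorname{diag}(\gamma)M)$. The value $\delta=0$ is excluded automatically, because $I$ is nonsingular.

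Next I will upgrade the supremum to a maximum. The map $\gamma\mapsto\rho(\operatorname{diag}(\gamma)M)$ is continuous, since eigenvalues depend continuously on the matrix entries, and $\mathcal{D}$ is compact. This handles the degenerate case as well: if $\rho\equiv 0$ on $\mathcal{D}$, the feasible set in~\eqref{eq:nudef2} is empty, the infimum is $+\infty$, and $\nu=0=\max\rho$.

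For the correspondence of optimizers, assume $\nu>0$. If $\gamma^*$ is a maximizer, set $\delta^*=\gamma^*/\nu$. Then $\mathbf{1}^\top\delta^*=1/\nu$, and $I-\operatorname{diag}(\delta^*)M$ is singular because $\nu=\rho(\operatorname{diag}(\gamma^*)M)$ is an eigenvalue of $\operatorname{diag}(\gamma^*)M$. So $\delta^*$ is feasible and attains the infimum $1/\nu$. Conversely, let $\delta^*$ be a minimizer, so $\mathbf{1}^\top\delta^*=1/\nu$, and put $\gamma^*=\nu\delta^*\in\mathcal{D}$. Singularity of $I-\operatorname{diag}(\delta^*)M$ says that $\nu$ is an eigenvalue of $\operatorname{diag}(\gamma^*)M$, hence $\rho(\operatorname{diag}(\gamma^*)M)\ge\nu$. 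Combined with $\rho(\operatorname{diag}(\gamma^*)M)\le\nu$ from the first part, this shows $\gamma^*$ is a maximizer.

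The main point needing care is the Perron--Frobenius step. Singularity of $I-t\operatorname{diag}(\gamma)M$ only requires some eigenvalue to equal $1/t$; a complex eigenvalue, or a positive one smaller than $\rho$, cannot give equality at a smaller $t$. The argument therefore hinges on using that $\rho$ of a nonnegative matrix is an eigenvalue and bounds all real positive eigenvalues. A secondary point is justifying continuity of $\rho$ to obtain attainment of the maximum.
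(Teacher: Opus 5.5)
Your proposal is correct and follows essentially the same route as the paper. Both arguments use the rescaling $\delta=t\gamma$ with $t=\mathbf{1}^\top\delta$, the Perron--Frobenius fact that the smallest admissible $t$ is $1/\rho(\operatorname{diag}(\gamma)M)$, compactness of $\mathcal{D}$ with continuity of $\rho$ for attainment, and the same two-direction check for the optimizer correspondence. Your converse step is, if anything, a slightly more explicit version of what the paper compresses into ``as argued in the first part'': singularity makes $\nu$ an eigenvalue of $\operatorname{diag}(\gamma^*)M$, so $\rho\ge\nu$, and combining this with $\rho\le\nu$ gives equality.
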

\begin{proof}
    Consider~\eqref{eq:nudef2}. If $\delta=\mathbf{0}$, then $(I-\operatorname{diag}(\delta)M)$ is not singular, so assume that $\delta\neq\mathbf{0}$. Then any permissible $\delta$ can be written as
    \begin{equation*}
        \delta=q\gamma,\qquad\gamma\in\mathcal{D},\qquad
        q=\mathbf{1}^\top\delta>0.
    \end{equation*}
    For a fixed $\gamma$, the smallest $q$ for which the matrix $(I-q\operatorname{diag}(\gamma)M)$ is singular is $q=1/\rho(\operatorname{diag}(\gamma)M)$ by the Perron--Frobenius theorem (we use the convention $1/0=\infty$). Since $\mathbf{1}^\top\delta=q$, minimizing the objective in~\eqref{eq:nudef2} is therefore equivalent to minimizing $1/\rho(\operatorname{diag}(\gamma)M)$ over $\gamma\in\mathcal{D}$. Hence,~\eqref{eq:equiv} holds, where the maximum exists because $\mathcal{D}$ is compact and the spectral radius is continuous.

    Now, if $\nu>0$, $\gamma^*$ is a maximizer of~\eqref{eq:equiv} and $\delta^*=\gamma^*/\nu$, it holds by construction that $\delta^*\in\mathbb{R}^n_+$ and $(I-\operatorname{diag}(\delta^*)M)$ singular, because $\rho(\operatorname{diag}(\delta^*)M)=\rho(\operatorname{diag}(\gamma^*)M)/\nu=1$. Hence, $\delta^*$ is a feasible solution to~\eqref{eq:nudef2}, which by the first part of the proof has objective value $1/\nu$, so it is a minimizer. Conversely, assume that $\delta^*$ minimizes~\eqref{eq:nudef2} and define $\gamma^*=\nu\delta^*$. Since $\sum_{i=1}^n\delta^*_i=1/\nu$, $\gamma^*\in\mathcal{D}$. As argued in the first part, the smallest scaling $q$ for which $(I-q\operatorname{diag}(\gamma^*)M)$ is singular is $q=1/\rho(\operatorname{diag}(\gamma^*)M)=1/\nu$. Thus, $\gamma^*$ is a maximizer of~\eqref{eq:equiv}.
\end{proof}

The optimization problem in~\eqref{eq:equiv} has been studied previously in other contexts in~\cite{elsner2015} and~\cite{dasaratha2023}, motivated by resource allocation and growth maximization problems. This connection is natural, since finding the minimal perturbation magnitude $\delta^*$ with respect to the sum of each individual magnitude that destabilizes $\mathbf{G}$ can be interpreted as a resource allocation problem.

There are also useful bounds for $\nu$. In particular, an upper bound for $\nu$ can be obtained as the solution to the optimization problem
\begin{equation}\label{eq:nubar}
    \bar{\nu}=\inf\limits_{d\in\mathbb{R}^n_{++}}\max\limits_{i,j}\left\lbrace\frac{d_i}{d_j}M_{ij}\right\rbrace,
\end{equation}
having noted the equivalence between the upper bound definitions in~\cite{kjellqvist2022} and~\cite{elsner2015}.

Additionally,~\cite{elsner2015} provides a refined upper bound. We state the result for $\nu$, which follows directly from results in~\cite{elsner2015}. Here, we let $s(M)=\max_iM_{ii}$ be the largest diagonal value of $M$, which is a lower bound to $\nu$~\cite{kjellqvist2022}, and $\bar{\nu}^{\mathrm{ref}}$ denotes the refined upper bound.

\begin{theorem}\label{thm:rest1}
    \begin{equation*}
        \nu\leq\bar{\nu}^{\mathrm{ref}}\coloneq\Big(1-\frac{1}{n}\Big)\bar{\nu} + \frac{1}{n}s(M)\leq\bar{\nu}.
    \end{equation*}
\end{theorem}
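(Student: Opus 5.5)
The plan is to prove the two inequalities separately. The right-hand one is immediate. The left-hand one follows from Proposition~\ref{thm:matmul} combined with a diagonal-similarity argument and monotonicity of the spectral radius.

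\emph{Right inequality.} For any $d\in\mathbb{R}^n_{++}$ we have $\max_{i,j}\frac{d_i}{d_j}M_{ij}\ge \frac{d_i}{d_i}M_{ii}=M_{ii}$ for every $i$. Hence $\bar\nu\ge s(M)$, and so $(1-\frac1n)\bar\nu+\frac1n s(M)\le\bar\nu$.

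\emph{Left inequality.} By Proposition~\ref{thm:matmul} it suffices to show that $\rho(\operatorname{diag}(\gamma)M)\le (1-\frac1n)t+\frac1n s(M)$ for every $\gamma\in\mathcal{D}$, where $t=\max_{i,j}\frac{d_i}{d_j}M_{ij}$ and $d\in\mathbb{R}^n_{++}$ is arbitrary. Taking the infimum over $d$ then gives the claim, since the bound is affine and increasing in $t$.

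Step 1: reduce to a comparison matrix. Fix $d$ and let $D=\operatorname{diag}(d)$. Diagonal matrices commute, so $\operatorname{diag}(\gamma)M$ is similar to $\operatorname{diag}(\gamma)DMD^{-1}$. The entries of $DMD^{-1}$ satisfy $(DMD^{-1})_{ij}\le t$ for $i\ne j$ and $(DMD^{-1})_{ii}=M_{ii}\le s:=s(M)$. Also $s\le t$. Therefore, elementwise,
\[
0\le \operatorname{diag}(\gamma)DMD^{-1}\le B:=t\,\gamma\mathbf{1}^\top-(t-s)\operatorname{diag}(\gamma).
\]
Note that $B\ge0$. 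Monotonicity of the spectral radius for nonnegative matrices then gives $\rho(\operatorname{diag}(\gamma)M)\le\rho(B)$.

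Step 2: bound $\rho(B)$. Let $\lambda=\rho(B)$ and, by Perron--Frobenius, take a nonnegative eigenvector $x\ne0$ with $Bx=\lambda x$. Write $S=\mathbf{1}^\top x$ and $c=t-s\ge0$. Componentwise the eigen-equation reads
\[
(\lambda+c\gamma_i)x_i=t\gamma_i S.
\]
If $S=0$, then some $x_i>0$ forces $\lambda=-c\gamma_i\le0$, so $\lambda=0$ and the bound is trivial. The same holds if $\lambda=0$. Otherwise $\lambda>0$, and we may solve $x_i=t\gamma_iS/(\lambda+c\gamma_i)$. Summing over $i$ and dividing by $S$ gives
\[
1=f(\gamma):=\sum_i \frac{t\gamma_i}{\lambda+c\gamma_i}.
\]

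Step 3: use concavity. Each map $\gamma_i\mapsto t\gamma_i/(\lambda+c\gamma_i)$ is concave on $\gamma_i\ge0$, so $f$ is concave and symmetric. On the simplex $\mathcal{D}$, $f$ is therefore maximized at $\gamma=\frac1n\mathbf{1}$. This follows from Jensen's inequality applied to the average of the $n$ coordinate permutations of $\gamma$. Hence
\[
1\le \frac{t}{\lambda+c/n},
\]
which rearranges to $\lambda\le t-\frac{c}{n}=(1-\frac1n)t+\frac1n s$.

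I expect Step 3 to be the main obstacle: finding a bound on $\rho(B)$ that is uniform over the simplex. The eigenvalue-equation-plus-concavity route above is the approach I would commit to. The remaining care is only in the degenerate cases ($\lambda=0$, $S=0$, or some $\gamma_i=0$), all of which are handled above.
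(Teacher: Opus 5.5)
Your proof is correct, and it takes a different route from the paper. The paper gives no argument of its own. It states the bound as following directly from \cite{elsner2015}, using Proposition~\ref{thm:matmul} to identify $\nu$ with the spectral-radius maximization studied there, together with the equivalence of the two definitions of $\bar{\nu}$, which it notes but does not show.

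You instead prove the bound from scratch, in three steps:
\begin{enumerate}
\item a diagonal similarity with $D=\operatorname{diag}(d)$;
\item entrywise domination of $\operatorname{diag}(\gamma)DMD^{-1}$ by $B=\operatorname{diag}(\gamma)\bigl(t\mathbf{1}\mathbf{1}^\top-(t-s)I\bigr)$;
\item the secular equation $\sum_i t\gamma_i/(\lambda+c\gamma_i)=1$ combined with Jensen's inequality.
\end{enumerate}
Each step checks out. In particular, $t\ge s$ holds because the maximum in~\eqref{eq:nubar} also ranges over $i=j$, and $B\ge 0$ holds because $M\ge 0$. Your case $S=0$ is vacuous, since a nonzero nonnegative eigenvector has $\mathbf{1}^\top x>0$, but treating it does no harm.

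Your route buys three things:
\begin{itemize}
\item It can be checked within the paper against the paper's own definition of $\bar\nu$.
\item It works for each fixed $d$, so you never need the infimum in~\eqref{eq:nubar} to be attained.
\item It explains the weight $1/n$. Step~3 is tight at $\gamma=\frac1n\mathbf{1}$. So for $n\ge 2$ the matrix $M=t\mathbf{1}\mathbf{1}^\top-(t-s)I$, which has $\bar\nu=t$ and $s(M)=s$, attains $\nu=\bar{\nu}^{\mathrm{ref}}$. Hence no bound depending only on $\bar\nu$, $s(M)$ and $n$ can improve on it.
\end{itemize}

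The citation route is shorter and places the result within the existing spectral-radius-maximization literature. However, it leaves the translation between the two problem settings to the reader.
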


Next, we also obtain an important equivalence from~\cite{elsner2015}, showing that the lower bound equals the upper bound if and only if the upper bound is exact.
\begin{theorem}\label{thm:rest2}
    \begin{equation*}
        \nu=\bar{\nu}\quad\iff\quad\bar{\nu}=s(M)
    \end{equation*}
\end{theorem}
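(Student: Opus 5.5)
The reverse implication follows from sandwiching $\nu$ between its lower bound and the refined upper bound; the forward implication is the real content. The lower bound $s(M)\le\nu$ is immediate from Proposition~\ref{thm:matmul}: choosing $\gamma=e_i$ gives $\operatorname{diag}(e_i)M$, whose only nonzero row is row $i$, so its spectral radius is $M_{ii}$. Hence $s(M)\le\nu$. Combined with Theorem~\ref{thm:rest1}, this gives
\begin{equation*}
    s(M)\le\nu\le\Big(1-\frac{1}{n}\Big)\bar{\nu}+\frac{1}{n}s(M)\le\bar{\nu}.
\end{equation*}

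For ``$\Leftarrow$'', assume $\bar{\nu}=s(M)$. The chain then collapses and all terms are equal, so $\nu=\bar{\nu}$.

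For ``$\Rightarrow$'', assume $\nu=\bar{\nu}$. The chain gives $\bar{\nu}\le(1-\tfrac1n)\bar{\nu}+\tfrac1n s(M)$, which rearranges to $\tfrac1n\bar{\nu}\le\tfrac1n s(M)$. Since also $s(M)\le\nu=\bar{\nu}$, we conclude $\bar{\nu}=s(M)$. For $n=1$ the refined bound degenerates, but then $\nu=\bar{\nu}=s(M)=M_{11}$ trivially. Thus both directions follow once Theorem~\ref{thm:rest1} is available, and I expect no obstacle beyond that.

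If Theorem~\ref{thm:rest1} could not be assumed, the hard step would be the forward direction. There I would argue directly: take a maximizer $\gamma^*$ and a Perron vector of $\operatorname{diag}(\gamma^*)M$. Then I would use the min-max characterization $\bar{\nu}=\inf_d\max_{ij} d_iM_{ij}/d_j$, which is the maximum cycle geometric mean of $M$, against the weighted arithmetic–geometric mean inequality along the critical cycle. This would show that $\rho(\operatorname{diag}(\gamma)M)$ can reach the cycle mean only when the critical cycle is a loop, i.e.\ has length one. That loop-versus-longer-cycle comparison is the crux of Elsner's refined bound, and it is the part I would import rather than reprove.
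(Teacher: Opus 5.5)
Your proof is correct, but it does not follow the paper, which gives no argument for this statement at all. The paper simply imports the equivalence from \cite{elsner2015}, as a result separate from Theorem~\ref{thm:rest1}.

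You instead show that it is an immediate corollary of Theorem~\ref{thm:rest1} combined with the elementary bound $s(M)\le\nu$, which you obtain from $\gamma=e_i$ in Proposition~\ref{thm:matmul}. The key is that Theorem~\ref{thm:rest1} can be rewritten as $\bar{\nu}-\nu\ge\frac{1}{n}\big(\bar{\nu}-s(M)\big)\ge 0$. What this buys is economy and transparency. Only one external result has to be trusted, and the equivalence is exposed as a quantitative gap statement: $\nu<\bar{\nu}$ exactly when no diagonal entry attains $\bar{\nu}$. It does not buy independence from \cite{elsner2015}, since all the content sits in Theorem~\ref{thm:rest1}.

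Two minor points:
\begin{itemize}
\item The case $n=1$ needs no separate treatment, because your rearrangement only uses $1/n>0$.
\item Your fallback sketch is too vague to rely on. An argument localized to one critical cycle naturally bounds $\rho(\operatorname{diag}(\gamma)M)$ from below, whereas the forward direction needs an upper bound valid for every $\gamma$. Also, ``the'' critical cycle need not be unique.
\end{itemize}

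If you want a self-contained proof, prove Theorem~\ref{thm:rest1} directly:
\begin{itemize}
\item Set $\lambda=\bar{\nu}+\epsilon$ and pick $d$ with $\max_{i,j}(d_i/d_j)M_{ij}\le\lambda$. Let $\Gamma=\operatorname{diag}(\gamma)$ and $D=\operatorname{diag}(d)$.
\item Since diagonal matrices commute, $\rho(\Gamma M)=\rho(\Gamma DMD^{-1})$.
\item By monotonicity of $\rho$, this is at most $\rho\big(\lambda\gamma\mathbf{1}^\top-c\,\Gamma\big)$ with $c=\lambda-s(M)\ge0$.
\item The Perron root $x$ of that matrix, if positive, satisfies $\sum_i\lambda\gamma_i/(x+c\gamma_i)=1$.
\item The map $t\mapsto\lambda t/(x_0+ct)$ is concave, so Jensen at $x_0=\lambda-c/n$ gives $\sum_i\lambda\gamma_i/(x_0+c\gamma_i)\le 1$.
\item The left side is decreasing in its argument, hence $x\le x_0=(1-\frac{1}{n})\lambda+\frac{1}{n}s(M)$.
\item Letting $\epsilon\to0$ gives the refined bound.
\end{itemize}
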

From Theorem~\ref{thm:rest2}, we additionally conclude that for the cases when the infimum in~\eqref{eq:nubar} is attained by some $d$, Conjecture 3 in~\cite{kjellqvist2022} is confirmed.

We now use Proposition~\ref{thm:matmul} to provide reformulations and structural results that aid in computing~$\nu$.

\section{STRUCTURAL RESULTS AND COMPUTATION}

As indicated by the previous section, further computation of $\nu$ is only required when the upper and lower bounds are distinct, i.e., when $\bar{\nu}\neq s(M)$. In this case, we distinguish between $M>0$ and $M\geq0$, because if $M>0$, the problem simplifies to a problem of minimization of a sum of ratios, which also follows directly from results in~\cite{elsner2015}.
\begin{theorem}\label{thm:rest3}
    Let $M$ be the magnitude matrix of $\mathbf{G}$, $\mathcal{D}$ as defined in~\eqref{eq:ddefs}, and assume that $M>0$. Then,
    \begin{equation}\label{eq:minim}
        \nu = \Big(\min\limits_{y\in\mathcal{D}}\sum_{i=1}^n\frac{y_i}{(My)_i}\Big)^{-1}.
    \end{equation}
\end{theorem}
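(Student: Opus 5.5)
We start from Proposition~\ref{thm:matmul}, which gives $\nu=\max_{\gamma\in\mathcal{D}}\rho(\operatorname{diag}(\gamma)M)$, and turn the spectral radius into an explicit expression through its Perron vector. Since $M>0$, for every $\gamma\in\mathcal{D}$ the matrix $\operatorname{diag}(\gamma)M$ is nonnegative. Let $\lambda=\rho(\operatorname{diag}(\gamma)M)$; it is positive because some $\gamma_i>0$ and the $i$th row of $\operatorname{diag}(\gamma)M$ is then positive. By Perron--Frobenius there is an eigenvector $x\geq0$, $x\neq0$, with $\operatorname{diag}(\gamma)Mx=\lambda x$. Because $M>0$ and $x\neq 0$, we have $Mx>0$. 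Normalize so that $y=x/\mathbf{1}^\top x\in\mathcal{D}$. Componentwise the eigen-equation reads $\gamma_i (My)_i=\lambda y_i$, so
\begin{equation*}
\gamma_i=\lambda\,\frac{y_i}{(My)_i}.
\end{equation*}
Summing over $i$ and using $\mathbf{1}^\top\gamma=1$ gives
\begin{equation*}
\lambda=\Big(\sum_i \frac{y_i}{(My)_i}\Big)^{-1}.
\end{equation*}
Hence $\nu\le \big(\min_{y\in\mathcal{D}}\sum_i y_i/(My)_i\big)^{-1}$.

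For the reverse inequality, take any $y\in\mathcal{D}$ and define
\begin{equation*}
\gamma_i=\frac{y_i/(My)_i}{\sum_j y_j/(My)_j}.
\end{equation*}
This is well defined because $My>0$, and it gives $\gamma\in\mathcal{D}$. By construction $\operatorname{diag}(\gamma)My=\lambda y$ with $\lambda=(\sum_j y_j/(My)_j)^{-1}$. Since $y\geq0$ is nonzero, $\lambda$ is an eigenvalue of $\operatorname{diag}(\gamma)M$, so $\rho(\operatorname{diag}(\gamma)M)\ge\lambda$. Therefore $\nu\ge(\sum_j y_j/(My)_j)^{-1}$ for every $y$, and taking the best $y$ closes the gap.

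It remains to justify writing a minimum rather than an infimum. The objective $f(y)=\sum_i y_i/(My)_i$ is continuous on the compact set $\mathcal{D}$, because $(My)_i\ge \min_{k,l}M_{kl}\cdot\mathbf{1}^\top y>0$ on $\mathcal{D}$. The minimum is therefore attained, and the two inequalities combine to give~\eqref{eq:minim}.

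The main delicate point is the second direction. We need the constructed $\gamma$ to be admissible, which is exactly where $M>0$ is used: it guarantees $(My)_i>0$ even when some $y_i=0$. We also must not claim that $\lambda$ equals the spectral radius. We only need $\lambda\le\rho$, which holds trivially for any eigenvalue, so no uniqueness argument for the Perron vector is required.
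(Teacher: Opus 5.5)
Your proof is correct and complete. The paper gives no argument of its own for this statement; it only says the result follows directly from \cite{elsner2015}, so yours is a self-contained alternative to a citation. Starting from Proposition~\ref{thm:matmul}, you take a Perron vector $y\in\mathcal{D}$ of $\operatorname{diag}(\gamma)M$, write $\gamma_i=\lambda y_i/(My)_i$, and sum against $\mathbf{1}^\top\gamma=1$ to get $\nu\le(\min_y f_M(y))^{-1}$. You then invert the substitution for the reverse bound, needing only $\lambda\le\rho$. This avoids any Perron-uniqueness discussion for the possibly reducible matrix $\operatorname{diag}(\gamma)M$.

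Your route gives two things the citation does not. First, it shows exactly where $M>0$ is used: it keeps $(My)_i$ bounded away from zero on all of $\mathcal{D}$. That is needed for the division, for $Mx>0$ with the Perron vector, and for attainment of the minimum, and it is the content of the paper's remark that \eqref{eq:minim} fails for $M\geq0$. Second, it makes explicit that $y_i/(My)_i=\gamma_i/\lambda$. Hence the variables $\delta_i$ in \eqref{eq:bilin} are precisely $\delta^*=\gamma^*/\nu$ from Proposition~\ref{thm:matmul}. This identification justifies Algorithm~\ref{alg:nucomp} returning the $\delta$ obtained from \eqref{eq:bilin} as a minimal destabilizing perturbation, which the paper leaves implicit. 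It is also the same eigen-equation the paper later imposes as a constraint in \eqref{eq:bilin2}.

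A minor remark: your reason for $\lambda>0$ is valid, since a positive $i$th row contains $\gamma_iM_{ii}>0$ and $\rho\ge\gamma_iM_{ii}$. It is unnecessary, though, because $1=\lambda\sum_i y_i/(My)_i$ already forces $\lambda>0$.
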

Equation~\eqref{eq:minim} cannot in general be used for nonnegative matrices, because then $(My)_i$ can be zero on $\mathcal{D}$.

\subsection{$M>0$}

The case where the magnitude matrix has all nonzero elements, i.e., $M>0$, typically occurs when every input affects every output in $\mathbf{G}$. The gradient descent approach described in~\cite{elsner2015} provides a method for computing a lower bound for $\nu$ for this case. Although global convergence from arbitrary initializations is not guaranteed,~\cite{elsner2015} reports that their numerical experiments did not encounter nonglobal stationary points and suggests that stationary points other than the global extrema should be saddle points. We show that local minima that are not global can in fact occur.

Consider the $3\times3$ example
\begin{equation}\label{eq:example}
    M=
    \begin{bmatrix}
        4 & 16 & 12 \\
        8 & 4 & 4 \\
        1 & 16 & 8 \\
    \end{bmatrix}.
\end{equation}
Fig.~\ref{fig:contour} shows a contour plot of the objective function in~\eqref{eq:minim}, given by $f_M(y)=\sum_{i=1}^ny_i/(My)_i$. The contour plot shows two local minima with objective values of $0.131$ and $0.125$, but only the minimum near $y=[0,0,1]$ is global.
\begin{figure}
    \centering
    \vspace{1mm}
    \includegraphics[width=.9\linewidth]{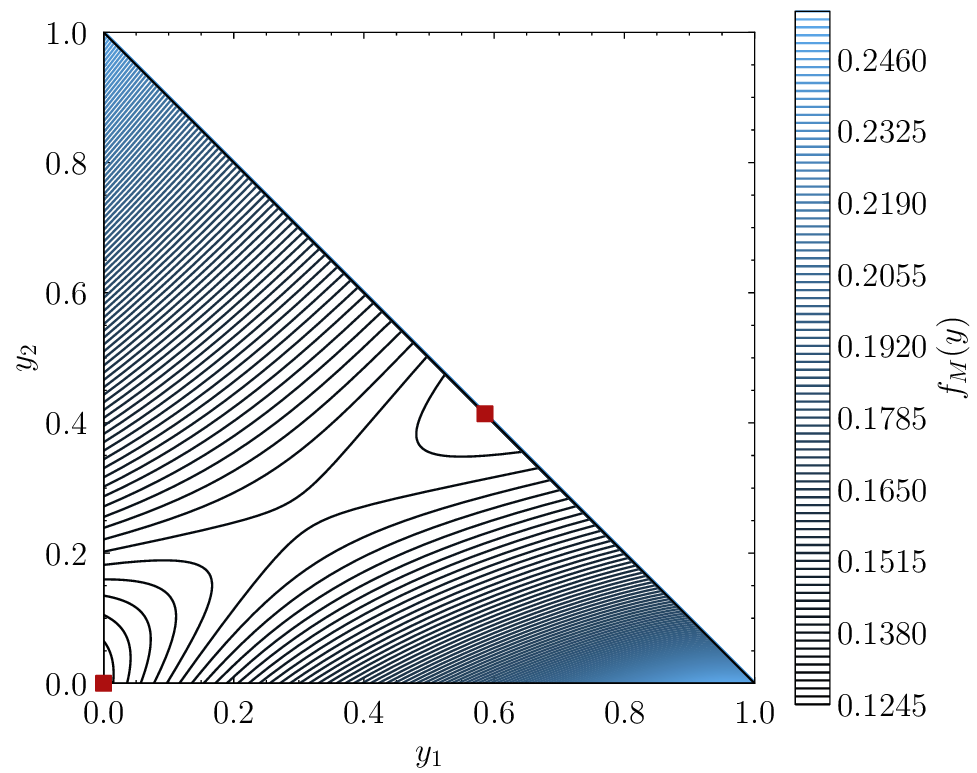}  
    \caption{Contour plot of the objective function in~\eqref{eq:minim} for example~\eqref{eq:example}, and the two local minima with distinct objective values, implying that gradient descent is not guaranteed to find the global optimum from an arbitrary initialization. Note that $y_3=1-y_1-y_2$ due to the constraints from $y\in\mathcal{D}$.}
    \label{fig:contour}
\end{figure}
Gradient descent can therefore not certify global solutions from arbitrary initializations. However, as we will discuss next,~\eqref{eq:minim} can still, in some cases, be solved globally.

Consider~\eqref{eq:minim}. By defining $\delta_i=y_i/q_i$, $q_i=(My)_i$, this can be reformulated into what is commonly called a bilinear program, or a quadratically constrained program (QCP), resulting in the minimization problem
\begin{equation}\label{eq:bilin}
    \begin{aligned}
        \nu^{-1} = & \min_{\delta,q,y}\sum_{i=1}^{n}\delta_i \\
        \text{s.t.}\quad & \mathbf{1}^\top y=1, & q_i = \sum_{j=1}^{n}M_{ij}y_j, \\
         & \delta_iq_i=y_i, & y\geq0.
    \end{aligned}
\end{equation}
In some cases, this minimization problem can be solved globally using relaxations with branch-and-bound methods~\cite{horst1996}, which are implemented in software packages such as SCIP~\cite{scip}, used in later sections. By a global solution, we mean that the solver's optimality gap is below a given tolerance. Such approaches are not expected to scale to large $n$ since the number of nodes required in branch-and-bound can grow exponentially with the number of dimensions.

Before moving on to the case $M\geq0$, we provide a set of three conditions sufficient for decreasing the dimension of the problem by identifying input-output channels that are irrelevant for our robustness analysis, and can thereby be excluded.
\begin{proposition}\label{thm:pruning}
    Assume that $M>0$, and let the index $t\in\{1,\ldots,n\}$ and $e_t$ denote the t-th standard basis vector. Suppose that there exists a $p\in\mathcal{D}$ with $\mathcal{D}$ in~\eqref{eq:ddefs}, $p_t<1$ and a vector $\beta\in\mathbb{R}^n_{++}$ such that
    \begin{equation}\label{eq:domcond}
        \begin{aligned}
            i)\text{  }& Mp\geq Me_t,\\
            ii)\text{  }& M_{ij}\geq\beta_iM_{tj}
            \quad\forall i,j,\\
            iii)\text{  }& \sum_{i=1}^n\frac{p_i}{\beta_i}\leq1.
        \end{aligned}
    \end{equation}
    Then there exists a minimizer $y^*$ to~\eqref{eq:minim} satisfying $y_t^*=0$.
\end{proposition}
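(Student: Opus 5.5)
The plan is an exchange argument: starting from any minimizer of~\eqref{eq:minim} with positive $t$-th entry, I move all of that mass onto the direction $p$ and show that the objective $f_M(y)=\sum_i y_i/(My)_i$ does not increase. Since $\mathcal{D}$ is compact and $f_M$ is continuous on it when $M>0$, a minimizer $y$ exists. If $y_t=0$ we are done, so assume $y_t>0$.

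\textbf{Construction.} Write $y=u+y_te_t$ with $u\ge0$ and $u_t=0$. Set $c=y_t/(1-p_t)$, which is well defined because $p_t<1$. Define
\begin{equation*}
y'=y-ce_t+cp=u+c(p-p_te_t).
\end{equation*}
Then $y'_t=y_t-c(1-p_t)=0$, $y'\ge0$, and $\mathbf{1}^\top y'=1$, so $y'\in\mathcal{D}$. It remains to show $f_M(y')\le f_M(y)$, since then $y'$ is also a minimizer.

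\textbf{Denominators.} By condition $i)$,
\begin{equation*}
My'=Mu+cMp-cp_tMe_t\ge Mu+c(1-p_t)Me_t=My .
\end{equation*}
Summing condition $ii)$ against $y\ge0$ gives $(My)_i\ge\beta_i(My)_t$ for every $i$. Applying $ii)$ with $i=t$ gives $M_{tj}\ge\beta_tM_{tj}$, and since $M>0$ this forces $\beta_t\le1$.

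\textbf{Bounding the objective.} Using $(My')_i\ge(My)_i$ and $y'_i=u_i+cp_i$ for $i\ne t$,
\begin{equation*}
f_M(y')\le\sum_{i\ne t}\frac{u_i}{(My)_i}+c\sum_{i\ne t}\frac{p_i}{(My')_i}.
\end{equation*}
Compare this with
\begin{equation*}
f_M(y)=\sum_{i\ne t}\frac{u_i}{(My)_i}+\frac{y_t}{(My)_t}.
\end{equation*}
The first sums agree, so it suffices to show $c\sum_{i\ne t}p_i/(My')_i\le y_t/(My)_t$. Using $(My')_i\ge(My)_i\ge\beta_i(My)_t$ and then condition $iii)$,
\begin{equation*}
c\sum_{i\ne t}\frac{p_i}{(My')_i}\le \frac{c}{(My)_t}\sum_{i\ne t}\frac{p_i}{\beta_i}\le\frac{c}{(My)_t}\Big(1-\frac{p_t}{\beta_t}\Big).
\end{equation*}
Since $\beta_t\le1$, we have $1-p_t/\beta_t\le1-p_t$. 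Therefore the right-hand side is at most $c(1-p_t)/(My)_t=y_t/(My)_t$, as required.

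\textbf{Main obstacle.} The delicate step is this last comparison: choosing the scaling $c$ so that $y'_t$ vanishes exactly, and noticing that $ii)$ at $i=t$ forces $\beta_t\le1$. Without that, condition $iii)$ alone would not absorb the $p_t$ term. The positivity $M>0$ is used here, and also to keep every denominator $(My)_i$ strictly positive throughout.
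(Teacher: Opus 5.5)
Your proof is correct and is essentially the paper's argument: the same shifted point $y+\frac{y_t}{1-p_t}(p-e_t)$, the same use of $i)$ to get $My'\ge My$, and the same use of $ii)$ and $iii)$ to bound the redistributed mass. The only differences are bookkeeping. You split off the $t$-th term and state $\beta_t\le 1$ explicitly, whereas the paper keeps the full sum $\sum_i p_i/(My)_i$ and uses $ii)$ at $i=t$ implicitly; you also justify that a minimizer exists, which the paper takes for granted.
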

\begin{proof}
    Assume that $y^*$ is a minimizer of~\eqref{eq:minim}, and suppose that $y^*_t>0$. Then, consider the modified solution $\hat{y}^*=y^* + \omega(p-e_t)$, where $\omega=y^*_t/(1-p_t)$. This enforces $\hat{y}^*_t=0$ and $\hat{y}^*\in\mathcal{D}$. By condition~$i)$,
    \begin{equation*}
        M\hat{y}^* = My^* + \omega(Mp-Me_t) \geq My^*.
    \end{equation*}
    Hence, using this elementwise inequality, we get
    \begin{equation*}
        \begin{aligned}
            \sum_{i=1}^n \frac{\hat{y}^*_i}{(M\hat{y}^*)_i} & \leq \sum_{i=1}^n \frac{\hat{y}^*_i}{(My^*)_i}\\
            & = f_{M}(y^*) + \omega\biggl(\sum_{i=1}^n \frac{p_i}{(My^*)_i} - \frac{1}{(My^*)_t}\biggr),
        \end{aligned}
    \end{equation*}
    where $f_M$ again denotes the objective function in~\eqref{eq:minim}. Condition~$ii)$ implies
    \begin{equation*}
        (My^*)_i\geq\beta_i(My^*)_t,
    \end{equation*}
    and therefore, using condition~$iii)$,
    \begin{equation*}
        \sum_{i=1}^n\frac{p_i}{(My^*)_i}\leq\frac{1}{(My^*)_t}\sum_{i=1}^n\frac{p_i}{\beta_i}\leq\frac{1}{(My^*)_t}.
    \end{equation*}
    Thus $f_M(\hat{y}^*)\leq f_M(y^*)$. Since $y^*$ is optimal,
    $\hat{y}^*$ is also optimal, and by construction the proposition follows.
\end{proof}
Proposition~\ref{thm:pruning} identifies coordinates whose mass can always be redistributed without increasing the objective, so an optimum exists with those coordinates removed. The vector $\beta$ can be computed directly from the second condition in~\eqref{eq:domcond}, which can then be used to solve for $p$ by a linear program. Proposition~\ref{thm:pruning} applies recursively, i.e., excluding some indices may enable further removals.

\subsection{$M\geq0$}

In the case of $M\geq0$ we can no longer guarantee that $(My)_i$ in~\eqref{eq:minim} is nonzero for all permissible $y$. However, in many cases, we can still compute $\nu$ in a manner similar to the $M>0$ case. Since $\operatorname{diag}(\gamma)M$ in Proposition~\ref{thm:matmul} is nonnegative, according to Perron-Frobenius theory, there exists a nonnegative eigenvector $x$ such that $\operatorname{diag}(\gamma)Mx = \rho(\operatorname{diag}(\gamma)M)x$. Let $x$ be normalized such that $\mathbf{1}^\top x=1$. Then for every feasible $\gamma$ there exists a $\lambda\geq0$ and $x$ such that $\lambda=\rho(\operatorname{diag}(\gamma)M)$ and $\operatorname{diag}(\gamma)Mx = \lambda x$. This leads to the maximization problem
\begin{equation}\label{eq:bilin2}
    \begin{aligned}
        \nu = & \max_{\gamma,x,\lambda} \lambda \\
        \text{s.t.}\quad & \mathbf{1}^\top\gamma=\mathbf{1}^\top x=1, & x\geq0, \\
         & \gamma_i\sum_jM_{ij}x_j=\lambda x_i, & \gamma\geq0,
    \end{aligned}
\end{equation}
which is also a QCP, but differs from~\eqref{eq:bilin}. In our numerical experiments, the solver finds a globally certified solution in fewer examples for this problem, but it relaxes the requirement from $M>0$ to $M\geq0$.

For reducible $M$, its spectral radius is determined by its irreducible diagonal blocks in Frobenius normal form (see, e.g.,~\cite{berman}). However, it is not immediate that the maximization problem in~\eqref{eq:equiv} decomposes in the same way. In the following proposition, we show that it does, proving that some minimal destabilizing perturbation can always be realized on one strongly connected component of the graph of $M$.

\begin{proposition}\label{thm:irred}
    Let $M_{S}$ denote the principal submatrix obtained from extracting the rows and columns of $M$ with indices $S$, and let $M_{S_k}$, $k\in\{1,\ldots,r\}$, denote the diagonal blocks of a Frobenius normal form of the magnitude matrix $M$, with index sets $S_k$. Then
    \begin{equation}\label{eq:nusub}
        \nu = \max_{k}\left(\max_{\gamma\in\mathcal{D}} \rho(\operatorname{diag}(\gamma)M_{S_k})\right).
    \end{equation}
\end{proposition}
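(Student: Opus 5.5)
We prove the two inequalities in \eqref{eq:nusub} separately, using the characterization \eqref{eq:equiv} from Proposition~\ref{thm:matmul}. Throughout, let $P$ be the permutation matrix that brings $M$ into Frobenius normal form, so that $P^\top M P$ is block (upper) triangular with irreducible or $1\times 1$ zero diagonal blocks $M_{S_1},\ldots,M_{S_r}$. For any $\gamma$ we have $P^\top\operatorname{diag}(\gamma)MP=\operatorname{diag}(P^\top\gamma)P^\top MP$. Hence $\operatorname{diag}(\gamma)M$ is permutation-similar to a block triangular matrix whose diagonal blocks are $\operatorname{diag}(\gamma_{S_k})M_{S_k}$. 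Here $\gamma_{S_k}$ denotes the restriction of $\gamma$ to $S_k$. The key point is that left multiplication by a nonnegative diagonal matrix only rescales rows, so it preserves the zero pattern off the blocks and hence the block triangular structure.

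\emph{Lower bound ($\geq$).} Fix $k$ and any $\tilde\gamma\in\mathcal{D}$ of dimension $|S_k|$. Extend it by zeros to $\gamma\in\mathcal{D}$ of dimension $n$. By the block triangular structure, the spectrum of $\operatorname{diag}(\gamma)M$ is the union of the spectra of its diagonal blocks. These blocks are $\operatorname{diag}(\tilde\gamma)M_{S_k}$ and zero matrices. Hence $\rho(\operatorname{diag}(\gamma)M)=\rho(\operatorname{diag}(\tilde\gamma)M_{S_k})$. Taking maxima and applying \eqref{eq:equiv} gives $\nu\geq\max_{\tilde\gamma\in\mathcal{D}}\rho(\operatorname{diag}(\tilde\gamma)M_{S_k})$ for every $k$.

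\emph{Upper bound ($\leq$).} Let $\gamma^*$ be a maximizer in \eqref{eq:equiv}. By the same block structure,
\[
\nu=\rho(\operatorname{diag}(\gamma^*)M)=\max_k\rho(\operatorname{diag}(\gamma^*_{S_k})M_{S_k}).
\]
Fix the maximizing $k$. If $\gamma^*_{S_k}=0$, this term is $0\leq$ the right-hand side of \eqref{eq:nusub}, since spectral radii are nonnegative. Otherwise, set $c=\mathbf{1}^\top\gamma^*_{S_k}\in(0,1]$ and $\tilde\gamma=\gamma^*_{S_k}/c\in\mathcal{D}$. By homogeneity of the spectral radius,
\[
\rho(\operatorname{diag}(\gamma^*_{S_k})M_{S_k})=c\,\rho(\operatorname{diag}(\tilde\gamma)M_{S_k})\leq\rho(\operatorname{diag}(\tilde\gamma)M_{S_k}),
\]
where the inequality uses $c\le 1$. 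This term is bounded by the inner maximum for block $k$, which proves $\leq$.

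\emph{Main obstacle.} The only delicate step is justifying that the spectrum of $\operatorname{diag}(\gamma)M$ decomposes over the \emph{same} blocks $S_k$ for every $\gamma$. Note that irreducibility of $\operatorname{diag}(\gamma)M_{S_k}$ may fail when some $\gamma_i=0$. However, we never need irreducibility, only the block triangular form. That form holds because the zero pattern of $\operatorname{diag}(\gamma)M$ contains that of $M$, and the eigenvalues of a block triangular matrix are those of its diagonal blocks.

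The rescaling argument in the upper bound also shows something useful for interpretation. Some maximizer, namely $\tilde\gamma$ on $S_k$ extended by zeros, is supported on a single strongly connected component. By Proposition~\ref{thm:matmul}, this yields a minimal destabilizing perturbation realized on that component.
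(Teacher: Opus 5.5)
Your proof is correct and follows essentially the same route as the paper: permutation similarity to a Frobenius normal form, $\rho(\operatorname{diag}(\gamma)M)=\max_k\rho(\operatorname{diag}(\gamma_{S_k})M_{S_k})$ from the block triangular structure, rescaling by the block mass $c\le 1$ for the upper bound, and zero-extension of a block maximizer for the lower bound. The only difference is cosmetic: you carry the permutation explicitly, whereas the paper assumes without loss of generality that $M$ is already in normal form.
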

\begin{proof}
    Let $P$ be a permutation matrix such that $PMP^\top$ is a Frobenius normal form of $M$. Letting $\Gamma=\operatorname{diag}(\gamma)$, we get
    \begin{equation*}
        \rho(\Gamma M)=\rho(P\Gamma M P^\top)=\rho((P\Gamma P^\top)(PMP^\top)).
    \end{equation*}
    Since $P\Gamma P^\top$ is diagonal with the entries of $\gamma$ permuted, and $\mathcal{D}$ is invariant under permutations, maximizing over $\gamma\in\mathcal{D}$ gives the same value for $M$ and $PMP^\top$. Hence, we may assume without loss of generality that $M$ is in Frobenius normal form. We then have
    \begin{equation*}
        \Gamma M = \begin{bmatrix}
            \Gamma_{S_1}M_{S_1} & \cdots & * \\
            \vdots & \ddots & \vdots \\
            0 & \cdots & \Gamma_{S_r}M_{S_r}
        \end{bmatrix},
    \end{equation*}
    so $\rho(\Gamma M)=\max_k\rho(\Gamma_{S_k}M_{S_k})$ because the matrix is upper block triangular. Now, let $\alpha_k=\sum_i(\Gamma_{S_k})_{ii}$, i.e., the mass distributed on each block. If $\alpha_k=0$, then $\Gamma_{S_k}=0$ and the corresponding spectral radius is zero. If $\alpha_k>0$, define $\hat\gamma_k=\gamma_{S_k}/\alpha_k\in\mathcal{D}$ and $\hat{\Gamma}_k=\operatorname{diag}(\hat{\gamma}_k)$. Then,
    \begin{equation*}
        \begin{aligned}
            \rho(\Gamma_{S_k}M_{S_k}) &=\alpha_k\rho(\hat{\Gamma}_kM_{S_k})\\
            &\leq\max_{\gamma\in\mathcal{D}}\rho(\operatorname{diag}(\gamma)M_{S_k}),
        \end{aligned}
    \end{equation*}
    since $\alpha_k\leq1$. This gives
    \begin{equation}\label{eq:maxineq}
        \max_{\gamma\in\mathcal{D}}\rho(\operatorname{diag}(\gamma)M)\leq\max_k\Big(\max_{\gamma\in\mathcal{D}}\rho(\operatorname{diag}(\gamma)M_{S_k})\Big).
    \end{equation}
    Now, let $k^*$ optimize the right-hand side in~\eqref{eq:maxineq}, and let $\gamma^*$ be optimal for that block. Construct $\gamma$ by setting $\gamma_{S_{k^*}}=\gamma^*$ and $\gamma_i=0$ outside of $S_{k^*}$. Then
    \begin{equation*}
        \rho(\operatorname{diag}(\gamma)M)=\rho(\operatorname{diag}(\gamma^*)M_{S_{k^*}}),
    \end{equation*}
    which proves~\eqref{eq:nusub} since the inequality in~\eqref{eq:maxineq} is always attained.
\end{proof}

The irreducible diagonal blocks of the Frobenius normal form of $M$ can be calculated as the strongly connected components of the directed graph of $M$ using standard algorithms.

In this section, we have proposed two reformulations of the optimization problem for computing $\nu$, and two structural results that can reduce the dimension of these optimization problems. We now put these ideas together and verify their usefulness in the following algorithm.

\subsection{Algorithm for computing $\nu$}

An algorithm\footnote{The algorithm and code to reproduce plots and results are available at: \url{https://github.com/ckjellson/ComputingNu}.} to compute $\nu$ together with the magnitude of one minimal destabilizing perturbation $\delta^*$ is presented in Algorithm~\ref{alg:nucomp}. The algorithm first uses the upper and lower bounds to filter out the easy cases. Next, since the spectral radius of a nonzero nonnegative irreducible matrix is strictly positive, $\nu_{S_k}=0$ only if $M_{S_k}=0$, so these cases can be disregarded. Then, as described in the previous section, we handle the two cases $M>0$ and $M\geq0$ separately, applying Proposition~\ref{thm:pruning} if $M>0$. If the used solver cannot certify global optimality for a subproblem, the algorithm fails.

The algorithm returns $\nu$ and the magnitude of one minimal destabilizing perturbation~$\delta^*$. The minimizer of~\eqref{eq:nudef2} is in general not unique, but the algorithm returns one such minimizer. The nonzero elements of this vector thereby identify one of the most fragile parts of the system in the sense of $\nu$-robustness, which can also be useful information for controller synthesis or system analysis.

\begin{algorithm}\caption{Compute $\nu$, $\delta^*$}\label{alg:nucomp}
    \DontPrintSemicolon
    \KwIn{$M\in\mathbb{R}^{n\times n}$, $M\geq0$}
    \KwOut{$\nu$, $\delta^*$, status}
    Compute $s(M)$, $\bar{\nu}$, and $i^*=\argmax_i M_{ii}$\;
    \If{$s(M)=\bar{\nu}$}{
        $\nu\gets s(M),\quad\text{status}\gets\text{optimal}$\;
        $\delta^*\gets e_{i^*}/\nu$ if $\nu>0$, otherwise $\delta^*\gets\varnothing$\;
        \Return
    }
    Compute $S_k$ as defined in Prop.~\ref{thm:irred}\;
    \ForEach{$k$ such that $M_{S_k}\neq0$}{
        \eIf{$M_{S_k}>0$}{
            Recursively compute $R_k\subseteq S_k$ by
            Prop.~\ref{thm:pruning}\;
            $\delta_{R_k}\gets\mathrm{solve}$~\eqref{eq:bilin}
            on $M_{R_k}$\;
            $\nu_{S_k}\gets(\mathbf{1}^\top\delta_{R_k})^{-1}$\;
            $\delta_{S_k}\gets0$, \quad
            $(\delta_{S_k})_{R_k}\gets\delta_{R_k}$\;
        }{
            $(\nu_{S_k},\gamma_{S_k}^*)\gets\mathrm{solve}$~\eqref{eq:bilin2}\;
            $\delta_{S_k}\gets\gamma_{S_k}^*/\nu_{S_k}$\;
        }
        \If{solution not globally certified}{
            $\text{status}\gets\text{failed}$\;
            \Return
        }
    }
    $k^*\gets\argmax_k\nu_{S_k}$\;
    $\nu\gets\nu_{S_{k^*}},\quad\text{status}\gets\text{optimal}$\;
    $\delta^*\gets0,\quad\delta^*_{S_{k^*}}\gets\delta_{S_{k^*}}$\;
\end{algorithm}

\section{NUMERICAL EXAMPLES}

\subsection{Random matrices}

\begin{figure}
    \centering
    \vspace{1mm}
    \includegraphics[width=0.98\linewidth]{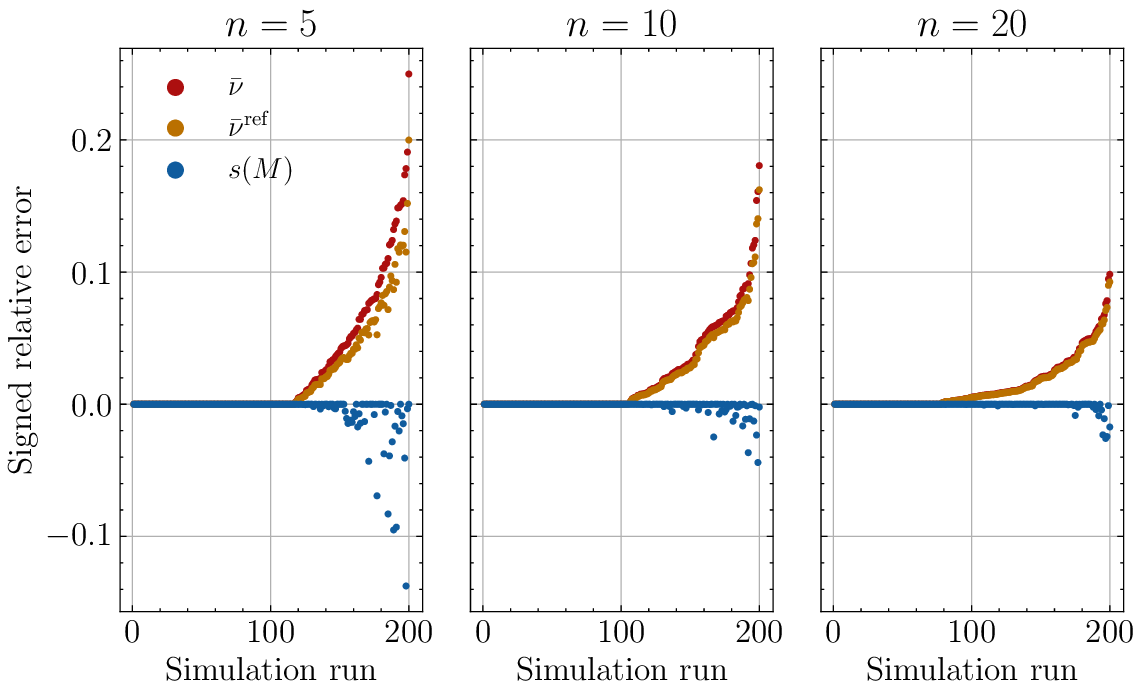}
    \caption{Signed relative errors of upper and lower bounds for 200 random matrices, sorted by increasing relative error. The lower bound is generally tighter than the upper bounds in our examples, and the bounds are often exact.}
    \label{fig:bounds}
\end{figure}

To demonstrate typical tightness of bounds and solvability, we first study random matrices $M=\mathbf{1}\mathbf{1}^\top+X$, where each element in the $n\times n$ matrix $X$ is sampled from a uniform distribution on the interval $[0,1]$ as in~\cite{elsner2015}. For each dimension $n\in\{5,10,20\}$, we generate 200 such matrices, compute all bounds, and also run Algorithm~\ref{alg:nucomp} to compute $\nu$ with SCIP~\cite{scip} as a global solver. Algorithm~\ref{alg:nucomp} succeeded in computing $\nu$ in all of the 600 cases.

For comparison of bounds, we consider the signed relative error of a bound $x$ given by $(x-\nu)/\nu$, where $\nu$ is computed by Algorithm~\ref{alg:nucomp}. The signed relative errors of $s(M)$, $\bar{\nu}$ and $\bar{\nu}^{\text{ref}}$ are plotted in Fig.~\ref{fig:bounds}, showing that the lower bound $s(M)$ is generally tighter than either upper bound in these examples. Also, $\bar{\nu}=s(M)$ in more than 50\% of the cases.

\subsection{Dynamical systems}\label{sec:dynsim}

\begin{figure*}[t]
    \centering
    \begin{subfigure}[t]{0.45\textwidth}
        \centering
        \includegraphics[width=\linewidth]{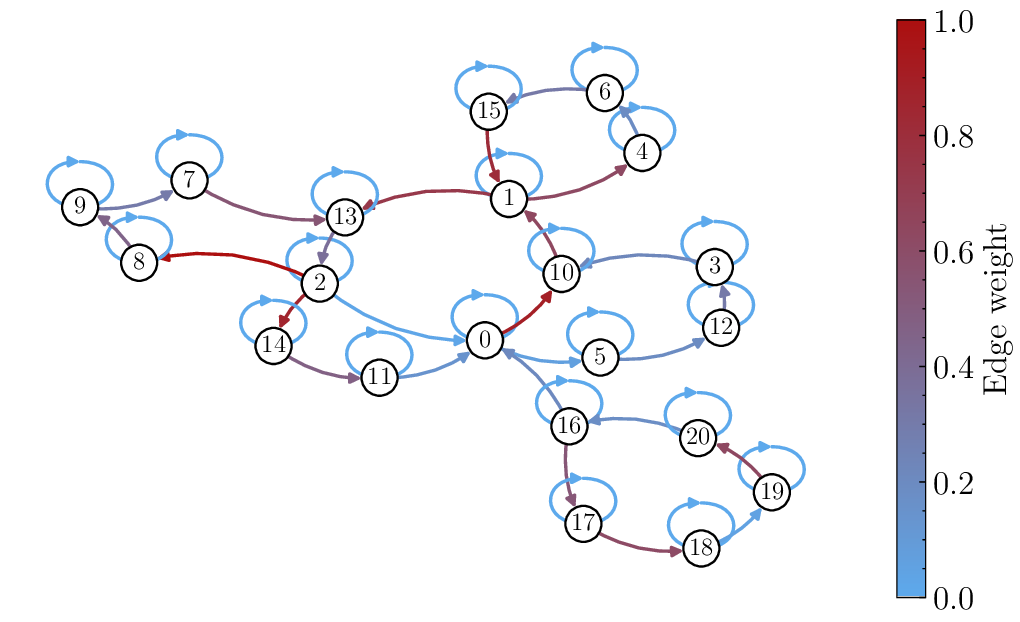}  
        \caption{}
        \label{fig:agraph}
    \end{subfigure}
    \hfill
    \begin{subfigure}[t]{0.45\textwidth}
        \centering
        \includegraphics[width=\linewidth]{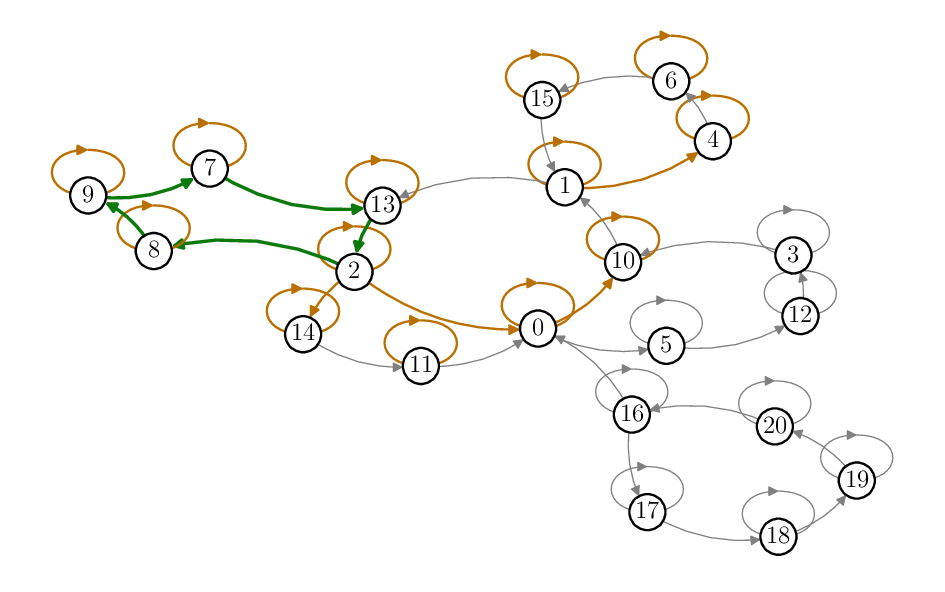}  
        \caption{}
        \label{fig:nugraph2}
    \end{subfigure}
    \caption{(a) Graph of $W$ in the dynamical system examples colored by edge weight. (b) Edges corresponding to the nonzero entries in~$\delta^*$ for System~2 in green, and the edges corresponding to indices reduced by Proposition~\ref{thm:pruning} in yellow. The minimal destabilizing perturbation is localized to the top left loop, indicating which part of the system is most fragile to uncertainty.}
    \label{fig:combined}
\end{figure*}

Although random matrices can say something about the general behavior of $\nu$ and its bounds, they do not necessarily represent what happens for magnitude matrices generated by a certain type of system $\mathbf{G}$. Such matrices are not random, but have an inherent structure that depends on $\mathbf{G}$, and this is not investigated in~\cite{kjellqvist2022}.

To illustrate this, we construct a weighted digraph with adjacency matrix $W$, which is visualized in Fig.~\ref{fig:agraph}. We then factorize this matrix as $BC=W$, such that when multiplying by $\mathbf{\Delta}\in\mathscr{D}$ according to $B\mathbf{\Delta}C$, each diagonal element in $\mathbf{\Delta}$ corresponds to a nonzero element in $W$. We consider two LTI systems with system matrices $A=\mathbf{0}$ and $A=W$, and with input and output matrices $B$ and $C$. Thus, closing the interconnection in Fig.~\ref{fig:conn}, we get the dynamics
\begin{equation}\label{eq:systems}
    \begin{aligned}
        \text{System 1:}&\quad z(t+1)=Bw(t),\quad\text{and}\\
        \text{System 2:}&\quad z(t+1)=Wz(t)+Bw(t),\\
        \text{with}&\quad v(t)=Cz(t),\quad w(t)=(\mathbf{\Delta}v)(t).
    \end{aligned}
\end{equation}

\begin{table}
    \caption{Value of $\nu$ and its bounds for the two example systems in~\eqref{eq:systems}. All bounds are loose.}
    \label{tab:results}
    \centering
    \begin{tabular}{l|l|l|l|l}\hline
         & $s(M)$ & $\nu$ & $\bar{\nu}^{\text{ref}}$ & $\bar{\nu}$ \\ \hline
        System 1 & 0.001 & 0.100 & 0.484 & 0.495 \\
        System 2 & 0.035 & 0.181 & 0.496 & 0.506 \\ \hline
    \end{tabular}
\end{table}

\begin{table}
    \caption{Runtimes for the computation of $\nu$ for System 2 from the ablation study.}
    \label{tab:times}
    \centering
    \begin{tabular}{l|l|l|l|l}\hline
         & Algorithm~\ref{alg:nucomp} & w/o P.\ref{thm:pruning} & w/o P.\ref{thm:irred} & w/o QCP~\eqref{eq:bilin} \\ \hline
        Runtime & 13s & 25s & timeout & timeout \\ \hline
    \end{tabular}
\end{table}

Using Algorithm~\ref{alg:nucomp} to compute $\nu$, and the convex formulation from~\cite{kjellqvist2022} to compute $\bar{\nu}$, we obtain the results in Table~\ref{tab:results}. This shows that both the lower and upper bounds are very loose, and therefore, computing $\nu$ exactly can provide substantially more informative results for such systems.

To illustrate how the computation of $\nu$ provides information about which part of the system is fragile as opposed to using $\mu$ or $\bar{\nu}$ as a robustness metric, we also show the edges corresponding to the nonzero elements in the resulting minimal destabilizing perturbation magnitude $\delta^*$ in Fig.~\ref{fig:nugraph2}. This also shows the edges corresponding to indices that are removed by Proposition~\ref{thm:pruning}, reducing the dimension of the largest subproblem from 36 to 22. Note that nodes 16-20 are not strongly connected to the rest of the graph, which, for System~2, results in two strictly positive subproblems.

Also, to illustrate the usefulness of the formulation in~\eqref{eq:bilin} and Propositions~\ref{thm:pruning} and~\ref{thm:irred}, we perform a simple ablation study with a timeout of 10 minutes. For details on how the exclusion of each component modifies Algorithm~\ref{alg:nucomp}, see the publicly available code. The resulting runtimes for computing $\nu$ for System 2 are reported in Table~\ref{tab:times}. This shows that all components of Algorithm~\ref{alg:nucomp} are useful for this example.

\section{CONCLUSION}

In this paper, we have established that computing the robustness metric $\nu$ proposed in~\cite{kjellqvist2022} is equivalent to maximizing the spectral radius of a matrix product, and have connected the metric to results in~\cite{elsner2015}. We then developed reformulations and structural results that enable computation of $\nu$ for nontrivial examples. Numerical examples illustrate both the usefulness of our formulations for computing $\nu$ and the computational gains from the dimension-reduction results. We emphasize that computing $\nu$ in this way not only provides a measure of robustness, but can also indicate which parts of the system are most vulnerable to uncertainty. Developing scalable computation and tighter bounds is an important direction for future work.


\section*{ACKNOWLEDGMENT}

The authors thank David Ohlin for directing our attention to this research direction and for helpful comments on the manuscript.

\bibliographystyle{IEEEtran}
\bibliography{bib}

\end{document}